\begin{document}

\newtheorem{theorem}{Theorem}[section]
\newtheorem{lemma}[theorem]{Lemma}
\newtheorem{corollary}[theorem]{Corollary}
\newtheorem{proposition}[theorem]{Proposition}
\newtheorem{exrule}{Exchange Rule}
\newcommand{\blackslug}{\penalty 1000\hbox{
    \vrule height 8pt width .4pt\hskip -.4pt
    \vbox{\hrule width 8pt height .4pt\vskip -.4pt
          \vskip 8pt
      \vskip -.4pt\hrule width 8pt height .4pt}
    \hskip -3.9pt
    \vrule height 8pt width .4pt}}
\newtheorem{definition}{Definition}
\def\boxit#1{\vbox{\hrule\hbox{\vrule\kern4pt
  \vbox{\kern1pt#1\kern1pt}
\kern2pt\vrule}\hrule}}

\newcommand{\ist}{internal spanning tree}
\newcommand{\iob}{internal out-branching}
\newcommand{\opt}[1]{\ensuremath{{\mathtt{opt}(#1)}}}
\addtolength{\baselineskip}{+0.4mm}

\title{\vspace{-.4in} \bf A $2k$-Vertex Kernel for Maximum Internal
  Spanning Tree\thanks{Supported by the National Natural
    Science Foundation of China under grants 61232001, 61472449, and
    61420106009.}}

\author{Wenjun Li\thanks{School of Information Science and
    Engineering, Central South University, Changsha, China.}
  \and
  \addtocounter{footnote}{-1} Jianxin Wang\footnotemark 
  \and
  \addtocounter{footnote}{-1} Jianer Chen\footnotemark\
  \thanks{Department of Computer Science and Engineering, Texas A\&M
    University, College Station, Texas.}
  \and
  Yixin Cao$^{\text{\Letter}}$\thanks{Department of Computing, Hong
    Kong Polytechnic University, Hong Kong,
    China. \href{mailto:yixin.cao@polyu.edu.hk} {\tt
      yixin.cao@polyu.edu.hk}.}  }
\date{}
\maketitle

 \begin{abstract}
   We consider the parameterized version of the maximum internal
   spanning tree problem, which, given an $n$-vertex graph and a
   parameter $k$, asks for a spanning tree with at least $k$ internal
   vertices.  Fomin et al.~[J. Comput. System Sci., 79:1--6] crafted a
   very ingenious reduction rule, and showed that a simple application
   of this rule is sufficient to yield a $3k$-vertex kernel.  Here we
   propose a novel way to use the same reduction rule, resulting in an
   improved $2k$-vertex kernel.  Our algorithm applies first a greedy
   procedure consisting of a sequence of local exchange operations,
   which ends with a local-optimal spanning tree, and then uses this
   special tree to find a reducible structure.  As a corollary of our
   kernel, we obtain a deterministic algorithm for the problem running
   in time $4^k \cdot n^{O(1)}$.
\end{abstract}

\section{Introduction}
A {\em spanning tree} of a connected graph $G$ is a subgraph that
includes all the vertices of $G$ and is a tree.  Spanning tree is a
fundamental concept in graph theory, and finding a spanning tree of
the input graph is a routine step of graph algorithms, though it
usually induces no extra cost: most algorithms will start from
exploring the input graph anyway, and both breadth- and
depth-first-search procedures produce a spanning tree as a byproduct.
However, a graph can have an exponential number of spanning trees, of
which some might suit a specific application better than others.  We
are hence asked to find constrained spanning trees, i.e., spanning
trees minimizing or maximizing certain objective functions.  The most
classic example is the minimum-weight spanning tree problem (in
weighted graphs), which has an equivalent but less known formulation,
i.e., maximum-weight spanning tree.  Other constraints that have
received wide attention include minimum diameter spanning tree
\cite{hassin-95-minimum-diameter-spanning-tree}, degree constrained
spanning tree~\cite{konemann-02-degree-bounded-mst,
  konemann-05-degree-bounded-mst}, maximum leaf spanning tree
\cite{lu-98-approximate-maximum-leaf-spanning-tree}, and maximum
internal spanning tree \cite{salamon-10-thesis-mist}.  Unlike the
minimum-weight spanning tree problem
\cite{karger-95-randomized-linear-mst}, most of these constrained
versions are NP-hard \cite{ozeki-11-survey-spanning-trees}.

The optimization objective we consider here is to maximize the number
of internal vertices (i.e., non-leaf vertices) of the spanning tree,
or equivalently, to minimize the number of its leaves.  More formally,
the {\em maximum internal spanning tree} problem asks whether a given
graph $G$ has a spanning tree with at least $k$ internal vertices.
Containing the Hamiltonian path problem as a special case, it is
clearly NP-hard.  This paper approaches it by studying {kernelization
  algorithms} for its parameterized version; here the parameter is
$k$, and hence we use the name {\em $k$-\ist}.  Given an instance $(G,
k)$ of $k$-\ist, a {\em kernelization algorithm} produces in
polynomial time an ``equivalent'' instance $(G', k')$ such that $k'
\leq k$ and that the {\em kernel size} (i.e., the number of vertices
in $G'$) is upper bounded by some function of $k'$.  Prieto and Sloper
\cite {prieto-03-mist} presented an $O(k^3)$-vertex kernel for the
problem, and improved it to $O(k^2)$ in the journal version \cite
{prieto-05-mist}.  Fomin et al.~\cite{fomin-13-kernel-mist} crafted a
very ingenious reduction rule, and showed that a simple application of
this rule is sufficient to yield a $3k$-vertex kernel.  Answering a
question asked by Fomin et al.~\cite{fomin-13-kernel-mist}, we further
improve the kernel size to $2k$.
\begin{theorem}\label{thm:main-2}
  The $k$-\ist\ problem has a $2k$-vertex kernel.
\end{theorem}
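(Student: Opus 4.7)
The plan is to turn the $3k$ bound of Fomin et al.\ into $2k$ by extracting much more information from their reduction rule, using a carefully chosen spanning tree as a ``witness.'' At a high level, I would show that after exhaustive reduction, some spanning tree $T$ satisfies $|L(T)| \le |I(T)|$, where $L(T)$ and $I(T)$ denote the leaves and internal vertices of $T$; since then $|V(G)| = |L(T)| + |I(T)| \le 2|I(T)|$, either the optimal already has $\ge k$ internal vertices (and we answer YES), or $|V(G)| \le 2k$.

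To construct such a $T$, I would first fix a small family of \emph{local exchange rules}: operations that delete one edge of the current spanning tree and add another non-tree edge, producing a new spanning tree whose internal-vertex count is strictly better, or is equal but whose ``leaf-profile'' (for instance, the multiset of tree-distances between leaves, or a suitable lexicographic potential) strictly improves. Starting from any spanning tree, repeatedly applying these exchanges terminates in polynomial time and yields a local-optimal spanning tree $T$. Crucially, local-optimality forces strong structural constraints on every leaf $\ell$ of $T$: for each neighbor $u$ of $\ell$ in $G$, the unique $\ell$-to-$u$ path in $T$ and the neighbors of $u$ in $T$ must be ``blocked'' in a way that prevents $\ell$ from being promoted to an internal vertex by an exchange.

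The main step is then to use these local-optimality constraints to match each leaf of $T$ to a distinct internal vertex, unless a reducible structure—precisely the one Fomin et al.\ use—is exposed. I would enumerate the possible local configurations around a leaf $\ell$: its tree-parent $p$, the siblings of $\ell$ that are also leaves, and the $G$-edges leaving $\ell$ to other internal vertices of $T$. In each case I would argue that either (i) $p$ (or a carefully chosen internal vertex near $\ell$) can be uniquely charged by $\ell$ so that the overall charging is injective, giving $|L(T)| \le |I(T)|$; or (ii) the configuration contains the substructure that triggers Fomin et al.'s reduction rule, contradicting the assumption that the instance is reduced.

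The main obstacle will be controlling the case where two or more leaves share the same natural ``charging target'' (e.g., a common tree-parent with several leaf-children, or two leaves adjacent in $G$ through a short internal path). This is exactly where a naive $T$ fails and one only gets $|L(T)| \le 2\,|I(T)|$, i.e.\ the $3k$ bound; the extra leverage must come from local-optimality combined with the reduction rule. Handling these conflicts requires either strengthening the exchange rules (to rule out such shared configurations in any local optimum) or performing a more global charging argument along paths of degree-two internal vertices. Getting the charging tight here, and verifying that every remaining bad configuration is indeed reducible, is where the technical work will concentrate.
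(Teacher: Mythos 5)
Your high-level plan coincides with the paper's: iterate local exchanges to a local-optimal spanning tree $T$, and show that either $|L(T)| \le |I(T)|$ (whence $n \le 2|I(T)|$ and we are done) or a set $L'$ with $|L'| \ge 2|N_G(L')|$ can be extracted and handed to Fomin et al.'s reduction rule. But the core of the argument is missing, and the specific route you sketch for it would not go through. First, your case (ii) --- ``the configuration contains the substructure that triggers Fomin et al.'s rule, contradicting the assumption that the instance is reduced'' --- is circular: being ``reduced'' with respect to that rule is not a checkable state, since one cannot search over all independent sets $L'$ with $|L'|\ge 2|N_G(L')|$ in polynomial time. The whole difficulty in the regime $2k < n < 3k-3$ is to \emph{constructively} exhibit one such set, and the paper's contribution is exactly a polynomial-time procedure that, given a local-optimal tree with $|L(T)|>|I(T)|$, outputs one.

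Second, the injective per-leaf charging you propose is too local. In the paper the witness set $L'$ is not read off from the neighbourhood of a single bad leaf; it is a subset $X_2$ of the leaves of one component of $T - C(T)$, where $C(T)$ is the set of ``critical'' tree edges crossed by no good cotree edge. The counting that makes $|X_2| \ge 2|N_G(X_2)|$ work is global within that component: it partitions the internal vertices into four classes and, besides two genuinely injective charges ($|X_1|\le|Y_1|$, via the fact that each blocking degree-two vertex has a unique leaf neighbour in $G$, and $|X_3|\le|Y_3|$), it needs a separation lemma stating that between any two ``free'' detachable vertices the tree path passes through a non-detachable internal vertex, yielding $|Y_4|\ge|Y_2|-1$. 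That last inequality is not a leaf-to-internal-vertex charge at all, and no enumeration of local configurations around individual leaves will produce it. Relatedly, your exchange rules are underspecified in a way that matters: the paper's second rule must admit exchanges that do not immediately increase $|I(T)|$ but create a leaf--leaf cotree edge enabling a follow-up exchange (hence its notion of detachable vertices), and it is precisely the inapplicability of these compound moves that drives the structural lemmas above. A secondary ``leaf-profile'' potential, as you suggest for termination, is neither needed in the paper's scheme nor obviously sufficient to force the required structure at a local optimum.
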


We obtain this improved result by revisiting the reduction rule
proposed by Fomin et al.~\cite{fomin-13-kernel-mist}.  A nonempty
independent set $X$ (i.e., a subset of vertices that are pairwise
nonadjacent in $G$) as well as its neighborhood are called a {\em
  reducible structure} if $|X|$ is at least twice as the cardinality
of its neighborhood.  To apply the reduction rule one needs a
reducible structure.  Indeed, we are proving a stronger statement that
implies Theorem~\ref{thm:main-2} as a corollary.
\begin{theorem}\label{thm:main}
  Given an $n$-vertex graph $G$, we can find in polynomial time either
  a spanning tree of $G$ with at least $n /2$ internal vertices, or a
  reducible structure.
\end{theorem}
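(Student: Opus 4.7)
The plan is to maintain a spanning tree $T$ of the connected graph $G$ that we improve by local edge exchanges until no further improvement is possible, and then argue that either $T$ itself already has at least $n/2$ internal vertices, or its leaf set exposes a reducible structure. Fix any initial spanning tree $T_0$. The basic move is a $1$-exchange: add a non-tree edge $e\in E(G)\setminus E(T)$, then delete one edge $e'$ on the unique cycle of $T+e$, yielding a new spanning tree $T+e-e'$. I would design a small catalogue of ``profitable'' exchange patterns, each of which either strictly increases the number of internal vertices, or keeps this number constant while strictly decreasing a secondary lexicographic potential, for example the number of branching vertices or the sum of leaf-to-branching distances.

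As a model case, suppose $\ell$ is a leaf with tree-neighbor $u$, $\ell v\in E(G)\setminus E(T)$, and the $T$-path from $\ell$ to $v$ contains an edge $xy$ with $\deg_T(x),\deg_T(y)\geq 3$. Removing $xy$ and adding $\ell v$ turns $\ell$ internal, keeps $x$ and $y$ internal (their degrees drop from $\geq 3$ to $\geq 2$), and possibly also promotes $v$ from a leaf to an internal vertex; a strict gain. A similar case analysis for the remaining patterns (a leaf $G$-adjacent to another leaf, a leaf whose non-tree $G$-edges reach $T$ through a long degree-$2$ path, etc.) supplies the rest of the rules. Because each rule strictly decreases a bounded nonnegative integer-valued measure, the greedy procedure halts in polynomial time and returns a locally optimal $T$.

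Now assume $|I(T)|<n/2$, so the leaf set $L=L(T)$ satisfies $|L|>n/2>|I(T)|$. Local optimality should force strong structural constraints on $L$: for instance, severely restricting when two leaves can be $G$-adjacent, and confining every non-tree $G$-edge incident to $L$ to a vertex of very restricted type (e.g., a degree-$2$ path vertex, or a branching vertex lying ``close'' to the leaf in $T$). Taking $X$ to be $L$ itself, or a carefully selected independent subset of $L$, I would establish $|N_G(X)|\leq |X|/2$ by a charging argument: each element of $N_G(X)$ is assigned a bounded number of leaves of $X$, the counts constrained by the local-optimality restrictions together with the standard tree identities $\sum_v \deg_T(v)=2(n-1)$ and $|L(T)|\geq |B(T)|+2$, where $B(T)$ is the set of branching vertices of $T$.

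The main obstacle is calibrating the set of exchange rules so that they simultaneously (i) can be applied only polynomially many times, (ii) produce a locally optimal tree in which the structural constraints on $L$ and $N_G(L)$ are tight enough to yield the exact $2{:}1$ ratio, and (iii) keep the case analysis of what counts as a profitable exchange tractable. In particular, designing the right secondary potential for neutral, internal-count-preserving exchanges, and handling the delicate case of leaves whose only $G$-neighbors are degree-$2$ path vertices of $T$, is where the real work will concentrate.
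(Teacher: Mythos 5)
Your high-level strategy coincides with the paper's: run a local-exchange procedure on a spanning tree until no profitable exchange remains, then use the leaf set of the locally optimal tree to exhibit the reducible structure. However, as written the proposal is a plan rather than a proof, and it defers exactly the two steps that carry all the difficulty. First, the catalogue of exchange rules is left unspecified beyond the two easy model cases (a leaf--leaf cotree edge, and a cotree edge crossing an edge between two branchpoints). The paper's actual second rule requires the notions of \emph{good} cotree edges and \emph{detachable} degree-two vertices, and a four-way case analysis in which a single exchange that is neutral for the internal-vertex count is immediately chained with a further application of the leaf--leaf rule so that every invocation strictly gains an internal vertex; getting these cases right is what makes the later structural claims about the maximal tree true, and none of this is supplied.

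Second, and more seriously, your proposed endgame cannot work in the form stated. Taking $X = L(T)$ only yields $|N_G(X)| \le |I(T)| < |L(T)|$, a ratio of $1$, not $2$; this recovers the Fomin et al.\ $3k$ bound but not $2k$. The ``carefully selected independent subset'' is precisely where the content of the theorem lies. The paper obtains it by cutting $T$ along \emph{critical} edges (tree edges crossed by no good cotree edge), locating a component in which leaves outnumber internal vertices, and then partitioning the leaves into three classes $X_1, X_2, X_3$ and the internal vertices into four classes $Y_1,\dots,Y_4$ so that $N_G(X_2) \subseteq Y_2$, $|X_1| \le |Y_1|$, $|X_3| \le |Y_3|$, and $|Y_4| \ge |Y_2| - 1$; the last inequality rests on a separate lemma showing that any tree path between two ``non-problematic'' detachable vertices must pass through a non-detachable internal vertex. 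Your charging argument based on $\sum_v \deg_T(v) = 2(n-1)$ and $|L(T)| \ge |B(T)| + 2$ does not engage with any of this, and there is no indication of how the degree-two path vertices (which you correctly flag as the delicate case) would be charged. The gap is not a calibration issue but the absence of the decomposition and counting scheme that turns local optimality into the $2{:}1$ ratio.
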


The observation in \cite{fomin-13-kernel-mist} is that the leaves of a
depth-first-search tree $T$ are independent.  Therefore, if the graph
has more than $3 k - 3$ vertices, then either the problem has been
solved (when $T$ has $k$ or more internal vertices), or the set of (at
least $2k - 2$) leaves of $T$ will be the required independent set.
It is, however, very nontrivial to find a reducible structure when $2k
< n < 3 k - 3$, and this will be the focus of this paper.  We first
preprocess the tree $T$ using a greedy procedure that applies a
sequence of local-exchange operations to increase the number of its
internal vertices.  After a local optimal spanning tree is obtained,
we show that if it has more leaves than internal vertices, then a
subset of its leaves and its neighborhood make the reducible
structure.  We apply the reduction rule of \cite{fomin-13-kernel-mist}
to reduce it and then repeat the process, which terminates on either a
$2k$-vertex kernel or a solution.

It is interesting to point out that our kernelization algorithm will
never ends with a NO situation, which is common in kernelization
algorithms in literature.  It either returns a trivial YES instance,
or continuously reduces the graph until it has a spanning tree with at
least half internal vertices.  This also means that our kernelization
algorithm does not rely on the parameter $k$.  One should be noted
that to further improve a $2k$-vertex kernel for a graph problem seems
to be a very challenging, if possible, undertaking: more and more such
kernels have appeared in literature, which have stubbornly withstood
all subsequent attacks, however hard they were.

Priesto and Sloper~\cite{prieto-05-mist} also initiated the study of
parameterized algorithms (i.e., algorithms running in time
$O(f(k)\cdot n^{O(1)})$ for some function $f$ independent of
$n$)\footnote{Following convention, we use the $O^*(f(k))$ notation to
  suppress the polynomial factor $n^{O(1)}$ in the running time.}  for
$k$-\ist, which have undergone a sequence of improvement.  Closely
related here is to the {\em $k$-\iob} problem, which, given a {\em
  directed} graph $G$ and a parameter $k$, asks if $G$ has an
out-branching (i.e., a spanning tree having exactly one vertex of
in-degree $0$) with at least $k$ vertices of positive out-degrees.  It
is known that any $O^*(f(k))$-time algorithm for $k$-\iob\ can solve
$k$-\ist\ in the same time, but not necessarily the other way.  After
a successive sequence of studies \cite{gutin-09-iob, cohen-10-iob,
  fomin-12-sharp-separation, shachnai-14-representative-families,
  daligault-11-dissertation}, the current best deterministic and
randomized parameterized algorithms for $k$-\iob\ run in time
$O^*(6.86^k)$ and $O^*(4^k)$ respectively, which are also the best
known for $k$-\ist.  Table 1 summarizes the history of this line of
research.

\begin{table}[h]
  \caption{Known parameterized algorithms for problems $k$-\iob\ and  $k$-\ist\
    (note that an algorithm for the former  applies to the later as well).}
  \medskip
  \centering
\begin{tabular}{l |l r r}
  \toprule
  Problem & Running time & Reference & Remark
  \\
  \midrule
  & $O^*(k^{O(k)})$ & Gutin et al.~\cite{gutin-09-iob} &
  \\
  $k$-internal  & $O^*(55.8^k)$ & Cohen et al.~\cite{cohen-10-iob} &
  \\
  out-branching & $O^*(16^{k+o(k)})$ & Fomin et 
  al.~\cite{fomin-12-sharp-separation} &
  \\
  & $O^*(4^k)$ & Daligault and Kim \cite{daligault-11-dissertation} & 
  \emph{randomized}
  \\
  & $O^*(6.86^k)$ & Shachnai and  
  Zehavi~\cite{shachnai-14-representative-families} &  
  \\
  \midrule  
  & $O^*(k^{2.5 k})$ & Priesto and 
  Sloper~\cite{prieto-05-mist} &
  \\
  $k$-internal & $O^*(2.14^k)$ & Binkele-Raible et 
  al.~\cite{binkele-raible-13-mist} &   \emph{cubic graphs}
  \\
  spanning tree  & $O^*(8^k)$ & Fomin et al.~\cite{fomin-13-kernel-mist} &
  \\
  & $O^*(4^k)$ & {This paper} &
  \\
  \bottomrule 
\end{tabular}
\end{table}

The $O^*(4^k)$-time {\em randomized} algorithm for $k$-\iob\
\cite[Theorem 180]{daligault-11-dissertation} was obtained using a
famous algebraic technique developed by Koutis and Williams
\cite{koutis-09-group-algebras-for-fpt}, which, however, is very
unlikely to be derandomized.  As a corollary of
Theorem~\ref{thm:main-2}, we obtain an $O^*(4^k)$-time deterministic
algorithm for $k$-\ist,---it suffices to apply the $O^*(2^n)$-time
algorithm of Nederlof \cite{nederlof-13-polynomial-space-by-PIE} to
the $2k$-vertex kernel produced by
Theorem~\ref{thm:main-2},---matching the running time of the best
randomized algorithm

\begin{theorem}\label{thm:alg}
  The $k$-\ist\ problem can be solved in time $O^*(4^k)$.
\end{theorem}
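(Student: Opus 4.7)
The plan is to pipeline two results, using Theorem~\ref{thm:main-2} as a preprocessing step in front of a known exponential-time algorithm. First I would run the kernelization of Theorem~\ref{thm:main-2} on the input instance $(G,k)$. In polynomial time this produces an equivalent instance $(G',k')$ with $k' \leq k$ and $|V(G')| \leq 2k' \leq 2k$; by equivalence, $(G,k)$ is a yes-instance of $k$-\ist\ if and only if $(G',k')$ is. Note also that if the kernelization already certifies a yes-answer (via the spanning tree with at least $n/2$ internal vertices promised by Theorem~\ref{thm:main}), we are finished immediately.

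Second, I would solve the residual instance $(G',k')$ by brute force using Nederlof's $O^*(2^n)$-time algorithm~\cite{nederlof-13-polynomial-space-by-PIE}, which computes a maximum internal spanning tree of an $n$-vertex graph. Applied to $G'$ with $|V(G')| \leq 2k$, this takes time
\[
O^*\!\bigl(2^{|V(G')|}\bigr) \;\leq\; O^*\!\bigl(2^{2k}\bigr) \;=\; O^*(4^k),
\]
and by comparing the number of internal vertices in the tree it returns against the threshold $k'$, we decide the $k'$-\ist\ instance. Summing the polynomial kernelization time and this exponential step gives the claimed $O^*(4^k)$ bound.

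There is no real obstacle here beyond invoking the two ingredients correctly; the mild point to verify is simply that Nederlof's algorithm, stated for the optimization version, answers the threshold question at no extra cost, and that the kernelization preserves the optimum so that the decision on $(G',k')$ is the decision on $(G,k)$. Both are immediate from the cited results.
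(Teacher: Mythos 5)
Your proposal is correct and is exactly the paper's argument: the paper obtains Theorem~\ref{thm:alg} by applying Nederlof's $O^*(2^n)$-time algorithm to the $2k$-vertex kernel guaranteed by Theorem~\ref{thm:main-2}, yielding $O^*(2^{2k}) = O^*(4^k)$. Nothing further is needed.
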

It remains an open problem to develop a deterministic $O^*(4^k)$-time
algorithm for $k$-\iob.  Note that the minimum spanning tree problem
has been long known to be solvable in randomized linear time
\cite{karger-95-randomized-linear-mst}, while a deterministic
linear-time algorithm is still elusive.  As a final remark, there is
also a line of research devoted to developing approximation algorithms
for maximum \ist\ \cite{prieto-03-mist, knauer-09-approximate-mist,
  salamon-09-approximate-mist, salamon-08-mist}.  In a companion paper
\cite{li-14-approximate-mist}, we have used a similar local-exchange
procedure to improve the approximation ratio to $1.5$.

\section{A greedy local search procedure}

All graphs discussed in this paper shall always be undirected and
simple, and the input graph is assumed to be connected.  The vertex
set and edge set of a graph $G$ are denoted by $V(G)$ and $E(G)$
respectively.  For a vertex $v\in V(G)$, let $N_G(v)$ denote the
neighborhood of $v$ in $G$, and let $d_G(v) := |N_G(v)|$ be its degree
in $G$.  The neighborhood of a subset $U\subseteq V(G)$ of vertices is
defined to be $N_G(U) = \bigcup _{v \in U} N_G(v) \backslash U$.  A
tree $T$ is a {\em spanning tree} of a graph $G$ if $V(T) = V(G)$ and
$E(T)\subseteq E(G)$; edges not in $T$, i.e., $E(G)\setminus E(T)$,
are {\em cotree edges} of $T$.  A vertex $u\in V(T)$ is a {\em leaf}
of $T$ if $d_T(u)=1$, and an {\em internal vertex} of $T$ otherwise;
let $L(T)$ and $I(T)$ denoted the set of leaves and the set of
internal vertices of $T$ respectively.  An internal vertex $u$ of $T$
is a {\em branchpoint} if $d_T(u)\geq 3$.  Let $I_3(T)$ denote the set
of branchpoints of $T$, and let $I_2(T)$ denote other internal
vertices (having degree $2$ in $T$); the three vertex sets $L(T),
I_2(T),$ and $I_3(T)$ partition $V(T)$.

Since $|I(T)| = |V(T)| - |L(T)|$, to maximize it is equivalent to
minimizing the number of leaves.  Also connecting leaves and internal
vertices, especially branchpoints, of a tree $T$ is the following
elementary fact:
\[
|L(T)| - 2 = \sum_{v\in I(T)} \left( d_T(v) - 2 \right) = \sum_{v\in
  I_3(T)} \left( d_T(v) - 2 \right).
\]
Therefore, informally speaking, we need to decrease the number and
degrees of branchpoints.  We start from an arbitrary spanning tree $T$
of $G$.  We may assume that $T$ is not a path, (as otherwise the
problem has been solved,) to which we apply some local exchanges to
increase the number of internal vertices of $T$.  By a local exchange
we mean replacing an edge in $E(T)$ by a cotree edge of $T$ .  Recall
that for any pair of vertices $u,v$ in a tree $T$, there is a unique
path from $u$ to $v$, denoted by $P_T(u,v)$; if $u v$ is a cotree edge
of $T$, then the length of $P_T(u,v)$ is at least two.  To maintain a
tree, a cotree edge $u v$ can only replace a tree edge in $P_T(u, v)$.
Since our purpose is to eliminate leaves, we will be only concerned
with cotree edges incident to leaves of $T$.  The first exchange rule
is self-explanatory; here the existence of the branchpoint is ensured
by the assumption that $T$ is not a path.

\begin{exrule}[\cite{prieto-03-mist}]\label{rule:1}
  If there is a cotree edge connecting two leaves $l_1$ and $l_2$ of
  $T$, then find an edge $u v$ from $P_T(l_1,l_2)$ such that $u$ is a
  branchpoint, and substitute $l_1 l_2$ for $u v$ in $T$.
\end{exrule}
After an exhaustive application of Rule~\ref{rule:1}, all leaves in
the resulting tree are pairwise nonadjacent.  Henceforth we may assume
that each cotree edge is incident to at least one internal vertex.  We
remark that Fomin et al.~\cite{fomin-13-kernel-mist} achieved this by
using depth-first-search tree at first place.  We can surely use the
same way to get the initial spanning tree, but we still need
Rule~\ref{rule:1}, as later operations of the other exchange rule to
follow may introduce cotree edges connecting leaves that are
originally not.

\begin{definition}
  A cotree edge of $T$ is {\em good} if it connects a leaf $l$ and an
  internal vertex $w$ of $T$.  We say that $l w$ {\em crosses} every
  edge in the path $P_T(l, w)$.
\end{definition}
For notational convenience, when referring to a good cotree edge $l
w$, we always put the leaf $l$ first, and when referring to an edge $u
v$ crossed by it, we always put the vertex closer to $l$ first; hence,
$P_T(l,w)$ can be written as $l \cdots u v \cdots w$.  We would like
to point out that the same edge $u v$ can be crossed by two different
cotree edges and they may be referred to by different orders.

Let us consider the impact of a substitution on the involved vertices.
We will avoid the tree edge incident to $l$ but we do allow $v = w$,
and hence there are either three of four vertices involved.  The two
vertices of the cotree edge are clear: $l$ always becomes an internal
vertex, and $w$ always remains internal (independent of $w = v$ or
not).  On the other hand, $u$ and $v$ will remain internal if they are
in $I_3(T)$, but one or both of them may become leaves if they are in
$I_2(T)$ (with the only exception $v = w$).  Although some operation
does not increase, or even decreases, the number of internal vertices,
by switching vertices in $L(T)$ and $I(T)$, it may introduce cotree
edge(s) between leaves in the new tree, which enable us to
subsequently apply Rule~\ref{rule:1} and serve our purpose.  This is
never the case for the first edge in $P_T(u, v)$ and hence we avoid
it.  This observation is formalized in the next reduction rule, for
which we need a technical definition that characterize those vertices
in $I_2(T)$ that can participate in the aforementioned successive
exchanges.

\begin{definition} 
  All vertices in $I_3(T)$ are {\em detachable}.  A vertex $w\in
  I_2(T)$ is {\em detachable} if there exists a good cotree edge $l w$
  of $T$ satisfying at least one of the following: 
  \begin{enumerate}[(1)]
  \item $P_T(l,w)$ visits a branchpoint, or
  \item some vertex $v$ in $P_T(l,w)$ is incident a good cotree edge
    $l'v$ of $T$ with  $l' \ne l$.
  \end{enumerate}
\end{definition}
Let $D(T)$ denote the set of detachable vertices of $T$, and let
$D_2(T)$ denote those detachable vertices in $I_2(T)$.  Then $ D(T) =
D_2(T) \cup I_3(T) \subseteq I(T)$.  Note that the vertex $v$ in
item (2) of the definition of $D_2(T)$ necessarily has degree two, and
possibly $v = w$.

\begin{exrule}\label{rule:2}
  Let $u v$ be an edge crossed by a good cotree edge $l w$ of $T$.
  Substitute $l w$ for $u v$ if any of the following is true.
  \begin{itemize}
  \item $u\in I_3(T)$, and
    \begin{enumerate}[(a)]
    \item\label{case2} $v = w$ or $v\in I_3(T)$; or
    \item\label{case3} $v\in D_2(T)$ and there is a good cotree edge
      $l_v v$ with $l_v \ne l$.
    \end{enumerate}
  \item $u\in D_2(T)$, there is a good cotree edge $l_u u$ with
    $l_u \ne l$, and
    \begin{enumerate}[(a)]
    \setcounter{enumi}{2}
  \item\label{case4} $v = w$ or $v\in I_3(T)$; or
    \item\label{case5} $v\in D_2(T)$ and there is a good cotree
      edges $l_v v$ with $l_v\not\in \{l, l_u\}$.
    \end{enumerate}
  \end{itemize}
  Moreover, after \eqref{case3} and \eqref{case4}, apply
  Rule~\ref{rule:1} to $l_v v$ and $l_u u$ respectively; after
  \eqref{case5}, apply Rule~\ref{rule:1} to $l_u u$ and then to $l_v
  v$ (if still applicable).
\end{exrule}

One can check in polynomial time whether Rule~\ref{rule:2} (and which
case of it) is applicable.  To show that the whole procedure can be
finished in polynomial time, we need to argue that each invocation
increases the number of internal vertices by at least one.  Note that
Rule~\ref{rule:2} is only applied after Rule~\ref{rule:1} is no longer
applicable.
\begin{lemma}\label{lem0} 
  Applying Rule~\ref{rule:1} or Rule~\ref{rule:2} to a spanning tree
  $T$ of $G$ results in a new spanning tree $T'$ of $G$ satisfying
  $|I(T')| \ge |I(T)| + 1$.
\end{lemma}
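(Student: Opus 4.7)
The plan is to argue by case analysis on which rule is applied, tracking how each substitution changes the leaf/internal status of the (at most four) affected vertices and then accounting for any follow-up applications of Rule~\ref{rule:1}. The single preliminary observation I would use throughout is this: whenever a spanning tree $T^\bullet$ has two leaves $a, b$ joined by a cotree edge $ab$, the path $P_{T^\bullet}(a, b)$ contains a branchpoint if and only if $T^\bullet$ is not itself a path (equivalently $|L(T^\bullet)| \ge 3$), because otherwise every internal vertex of $T^\bullet$ on the path would have degree two and hence no other incidences.

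Rule~\ref{rule:1} itself is immediate. Since $u$ is a branchpoint we have $u \notin \{l_1, l_2\}$, so $l_1$ gains $l_1 l_2$ and becomes internal; $l_2$ likewise becomes internal unless $v = l_2$, in which exceptional case $l_2$ keeps degree one but $l_1$ alone already supplies the $+1$. The branchpoint $u$ retains degree at least two, and $v$ either stays internal (when $v \in I_3(T) \cup \{l_2\}$) or drops from $I_2(T)$ to $L(T')$; every subcase yields $|I(T')| \ge |I(T)| + 1$.

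For Rule~\ref{rule:2} I would handle the four cases in turn. The primary substitution of $lw$ for $uv$ always promotes $l$ to an internal vertex and leaves $w$ internal (it gains $lw$, and at most loses $uw$ when $v = w$). In case~\eqref{case2}, both $u$ and $v$ lie in $I_3(T) \cup \{w\}$ and remain internal, giving a clean $+1$. In cases~\eqref{case3} and~\eqref{case4}, exactly one of $u, v$ is in $I_2(T)$ and is demoted to a leaf, so the primary step is net zero; but now the cotree edge $l_v v$, respectively $l_u u$, joins two leaves in $T'$ and $|L(T')| = |L(T)| \ge 3$, so the preliminary observation makes Rule~\ref{rule:1} applicable, supplying the missing $+1$. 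In case~\eqref{case5}, both $u$ and $v$ become leaves for a net $-1$; $|L(T')| \ge 4$ so Rule~\ref{rule:1} on $l_u u$ applies and contributes $+1$, and for the subsequent Rule~\ref{rule:1} on $l_v v$ either the intermediate tree still has at least three leaves (yielding another $+1$, final total $+1$) or it has exactly two leaves, which forces it to be a path with $|I| = n - 2 \ge |I(T)| + 1$ already, justifying the ``if still applicable'' proviso.

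The delicate part will be verifying that the follow-up Rule~\ref{rule:1} invocations are truly well defined: the vertices $l_u, l_v$ must remain leaves after the primary substitution, the cotree edges $l_u u$ and $l_v v$ must survive in $T'$, and in case~\eqref{case5} they must not be disturbed by the first Rule~\ref{rule:1} on $l_u u$. These reduce to checking $l_u, l_v \notin \{l, u, v, w\}$, which follows from $l_u \ne l$, $l_v \notin \{l, l_u\}$ and the fact that $u, v, w$ are internal in $T$ while $l_u, l_v$ are leaves of $T$, together with the elementary point that a leaf of the intermediate tree can appear on $P_{T'}(l_u, u)$ only as an endpoint. With this bookkeeping in place the case analysis above closes the bound.
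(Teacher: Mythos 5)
Your proof is correct and follows essentially the same route as the paper's: the same vertex-by-vertex accounting for Rule~1 and for each of the four cases of Rule~2, with the net $0$ (cases (b),(c)) or net $-1$ (case (d)) of the primary substitution recouped by the follow-up applications of Rule~1, and the same ``path with $n-2$ internal vertices'' escape when the second follow-up in case (d) is inapplicable. Your preliminary observation and the final well-definedness checks make explicit some applicability claims the paper leaves implicit, but the underlying argument is the same.
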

\begin{proof}
  In Rule~\ref{rule:1} and Rule~\ref{rule:2}, the replaced edge $u v$
  is in $P_T(l_1, l_2)$ and $P_T(l, w)$ respectively, so the resulting
  subgraph $T'$ must be a spanning tree of $G$.  To compare the number
  of internal vertices, it suffices to consider these vertices
  incident to the added/deleted edges.

  After the application of Rule~\ref{rule:1}, $u$ remains an internal
  vertex as $d_{T'}(u) = d_T(u) - 1\ge 2$.  Note that possibly $v\in
  \{l_1, l_2\}$, and in this case $v$ remains a leaf but the other
  leaf becomes an internal vertex.  Otherwise, both $l_1$ and $l_2$
  become internal vertices, while $v$ might become a leaf.  In either
  case, the number of internal vertices increases by at least $1$.

  We now consider Rule~\ref{rule:2}.  Case \eqref{case2} is
  straightforward: after its application, all vertices in $\{l, w, u,
  v\}$ are internal vertices of the resulting tree $T'$; hence
  $|I(T')| = |I(T)| + 1$.  In case \eqref{case3}, after the
  substitution, $w$ and $u$ remain internal vertices, while $l$
  becomes an internal vertex and $v$ becomes a leaf of the new tree.
  Albeit the number of leaves does not change, the cotree edge of the
  new tree between $l_v$ and the new leaf $v$ enables us to apply
  Rule~\ref{rule:1}, which increases the number of internal vertices
  by $1$.  Case \eqref{case4} is similar as case \eqref{case3}: the
  first substitution does not affect the number of internal vertices,
  but the subsequent application of Rule~\ref{rule:1} increases it by
  $1$.  In case \eqref{case5}, after the first substitution, $w$
  remains an internal vertex, $l$ becomes an internal vertex, while
  both $u$ and $v$ become leaves; hence the number of leaves decreases
  by $1$.  Since $l_u$, $u$, $l_v$, $v$ are four distinct leaves in
  the resulting tree, applying Rule~\ref{rule:1} to $l_u u$ results in
  a spanning tree of $G$ that has at least $I(T)$ internal vertices.
  Either $l_v$ and $v$ (which remain leaves) are the only leaves and
  we are done, or we can apply Rule~\ref{rule:1} to $l_v v$ to
  increase the number of internal vertices.  This obtained tree $T'$
  has at least one more internal vertex than $T$.
\end{proof}

A spanning tree $T$ of a graph $G$ is called {\em maximal} if neither
exchange rule is applicable to it.  By Lemma~\ref{lem0}, each
application of an exchange rule to a spanning tree increases its
number of internal vertices at least $1$.  Since a spanning tree of
$G$ has at most $|V(G)| - 2$ internal vertices, we have the following
theorem.
\begin{theorem}\label{thm:maximal-trees}
  A maximal spanning tree of a graph can be constructed in polynomial
  time.
\end{theorem}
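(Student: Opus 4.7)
The plan is to turn the two exchange rules into a straightforward polynomial-time local-search loop and invoke Lemma~\ref{lem0} as the termination argument. First I would pick any spanning tree $T_0$ of $G$ (for instance a DFS or BFS tree, obtainable in linear time), and then iterate: at each step check whether Rule~\ref{rule:1} applies, and if not whether Rule~\ref{rule:2} applies; if either does, perform the corresponding substitution(s) to obtain a new spanning tree $T_{i+1}$, and otherwise return $T_i$ as maximal.

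The main thing to argue is that each iteration is polynomial and that the total number of iterations is polynomial. For the former, applicability of Rule~\ref{rule:1} is clearly checkable in $O(|E(G)|)$ time by scanning cotree edges and testing whether both endpoints are leaves of the current $T$. For Rule~\ref{rule:2}, one first precomputes $L(T)$, $I_2(T)$, $I_3(T)$, the set of good cotree edges, and the set $D_2(T)$ of detachable degree-two internal vertices; using these, for each good cotree edge $l w$ one walks along $P_T(l,w)$ (which has length at most $|V(G)|$) and checks each crossed edge $uv$ against the five conditions (a)--(e). All of this is clearly polynomial per iteration, and the subsequent cleanup applications of Rule~\ref{rule:1} required in the statement of Rule~\ref{rule:2} add only constant overhead per invocation.

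For termination, Lemma~\ref{lem0} tells us that $|I(T_{i+1})| \ge |I(T_i)| + 1$. Since any spanning tree of a graph on $n = |V(G)|$ vertices has at least two leaves, we have $|I(T_i)| \le n-2$ for every $i$, so the loop halts after at most $n-2$ iterations and outputs a tree $T$ for which neither rule applies, i.e., a maximal spanning tree in the sense of the definition preceding the theorem.

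The only subtlety, and not really an obstacle, is being careful about the order in which the successive operations prescribed by cases \eqref{case3}--\eqref{case5} of Rule~\ref{rule:2} are carried out and then bookkeeping $L(T)$, $I_2(T)$, $I_3(T)$, and $D(T)$ after each substitution. But since Lemma~\ref{lem0} has already verified that each composite invocation yields a genuine spanning tree with a strictly larger internal-vertex count, this bookkeeping only affects the polynomial factor and not the correctness of the loop.
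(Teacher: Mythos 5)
Your proposal is correct and follows essentially the same route as the paper: start from an arbitrary spanning tree, repeatedly apply the exchange rules, use Lemma~\ref{lem0} to show each invocation increases $|I(T)|$ by at least one, and bound the iteration count by $|V(G)|-2$ since every spanning tree has at least two leaves. The paper treats the polynomial-time applicability check of Rule~\ref{rule:2} as a one-line remark, whereas you spell it out slightly more, but the argument is the same.
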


\section{The kernelization algorithm}
Although neither exchange rule reduces the graph size, their
exhaustive application provides a maximal spanning tree whose
structural properties will be crucial for our kernelization algorithm.
In this section we are only concerned with maximal spanning trees.  We
will use the reduction rule of Fomin et
al.~\cite{fomin-13-kernel-mist}, which is recalled below.  Let \opt{G}
denote the maximum number of internal vertices a spanning tree of $G$
can have.
%
\begin{lemma}[\cite{fomin-13-kernel-mist}]
  \label{lem8} 
  Let $L'$ be an independent set of $G$ such that $|L'| \ge 2
  |N_G(L')|$.  We can find in polynomial time nonempty subsets
  $S\subseteq N_G(L')$ and $L\subseteq L'$ such that:
  \begin{enumerate}[(1)]
  \item $N_G(L) = S$, and
  \item the graph $(S\cup L, E(G)\cap (S \times L))$ has a spanning
    tree such that all vertices of $S$ and $|S|-1$ vertices of $L$ are
    internal.
  \end{enumerate}
  Moreover, let $G'$ be obtained from $G$ by adding a vertex $v_{\mbox
    {\tiny $S$}}$ adjacent to every vertex in $N_G(S)\backslash L$,
  adding a vertex $v_{\mbox {\tiny $L$}}$ adjacent to $v_{\mbox {\tiny
      $S$}}$, and removing all vertices of $S\cup L$, then $\opt{G'} =
  \opt{G} - 2|S| + 2$.
\end{lemma}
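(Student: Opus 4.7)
The plan is to split the lemma into a combinatorial existence part (items~(1) and~(2)) and the arithmetic part asserting $\opt{G'}=\opt{G}-2|S|+2$.

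For the combinatorial part, the hypothesis $|L'|\ge 2|N_G(L')|$ is exactly the premise of the standard Expansion Lemma from parameterized complexity: by a Hall/K\"onig-type argument on the bipartite graph between $L'$ and $N_G(L')$ one obtains in polynomial time nonempty $S\subseteq N_G(L')$ and $L\subseteq L'$ with $N_G(L)=S$ together with a \emph{$2$-expansion} of $S$ into $L$, meaning two distinct private neighbors $l_s^1,l_s^2\in L$ per $s\in S$, all $2|S|$ chosen leaves distinct. Iterating the argument (and, if needed, restricting to a single connected component) one can additionally assume that the bipartite graph on $S\cup L$ is connected. I would then build the desired spanning tree of this bipartite graph in three stages: (a) take the $2|S|$ expansion edges, creating $|S|$ vertex-disjoint ``cherries'' centred at the vertices of $S$; (b) merge these cherries into a single tree using $|S|-1$ additional bipartite edges, available by connectivity, each of which promotes one extra $L$-vertex to an internal ``linker''; (c) hang every still-isolated $l\in L$ as a pendant from any $s\in N_G(l)$. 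In the resulting tree every $s\in S$ has degree $\ge 2$ and is therefore internal, and exactly $|S|-1$ vertices of $L$ are internal, as required.

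For the arithmetic part, I would prove the two inequalities separately. The direction $\opt{G}\ge\opt{G'}+2|S|-2$ is a direct splicing: starting from an optimum spanning tree $T'$ of $G'$, delete $v_L$ (always a leaf) and $v_S$, insert the bipartite tree $T_B$ built above, and for each edge of the form $v_Sb$ with $b\in N_G(S)\setminus L$ in $T'$ replace it by some edge $sb$ with $s\in S\cap N_G(b)$, chosen so that the result is a tree. The count yields $|S|$ new internal vertices from $S$ and $|S|-1$ from $L$, minus the lost internal vertex $v_S$, for a net gain of $2|S|-2$. For the converse $\opt{G'}\ge\opt{G}-2|S|+2$, the key structural bound is that in any spanning tree $T$ of $G$ at most $|S|-1$ vertices of $L$ are internal, so $S\cup L$ contains at most $2|S|-1$ internal vertices in total. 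Indeed, writing $F=T[S\cup L]$, the conditions that $L$ is independent and $N_G(L)=S$ force every $T$-edge incident to $L$ to lie in $F$, and a short degree count on the forest $F$ yields that the number of $l\in L$ with $d_F(l)\ge 2$ is at most $|S|-c$, where $c$ is the number of components of $F$. Contracting $F$ into the gadget $\{v_S,v_L\}$ and selecting one external edge per component of $F$ to reconnect through $v_S$ then produces a spanning tree of $G'$ in which at most $2|S|-2$ internal vertices have been lost.

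The main obstacle is the contraction in the converse inequality: the forest $F$ may have several components with possibly overlapping external endpoints in $B=V(G)\setminus(S\cup L)$, so the reconnection edges through $v_S$ must be chosen carefully to avoid creating multi-edges or a disconnected result. The degree count on $F$ is the crucial combinatorial input; once it is in place, the edge accounting on both directions becomes routine.
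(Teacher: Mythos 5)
First, a caveat about the comparison you asked for: the paper does not prove Lemma~\ref{lem8} at all --- it is imported verbatim from Fomin et al.~\cite{fomin-13-kernel-mist} --- so there is no in-paper proof to measure your argument against; I can only assess it on its own terms. Your overall architecture (the $2$-expansion lemma for the existence of $S$ and $L$, an explicit tree on the bipartite graph for the direction $\opt{G}\ge\opt{G'}+2|S|-2$, and a degree count on the forest $F=T[S\cup L]$ for the converse) is the right one, and the refinement to ``at most $|S|-c$ internal vertices of $L$, where $c$ is the number of components of $F$'' is exactly the observation needed to absorb the $c-1$ outside vertices that may be demoted to leaves during reconnection.

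There is, however, a genuine gap in stage (b) of your tree construction. You assume the $|S|$ cherries can be merged into one tree by $|S|-1$ further bipartite edges, each promoting a \emph{distinct} $L$-vertex to a degree-two linker. The expansion lemma does not guarantee this, even after restricting to a connected component. Take $S=\{s_1,s_2,s_3\}$, give each $s_i$ two private degree-one neighbours $a_i,b_i$, and add one vertex $c$ adjacent to all of $s_1,s_2,s_3$; set $L=\{a_1,b_1,a_2,b_2,a_3,b_3,c\}$. Then $|L|=7\ge 2|S|$, $N_G(L)=S$, the bipartite graph is connected, and $S$ has a $2$-expansion into $L$, so this is a legitimate output of the expansion lemma. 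Yet every spanning tree of this bipartite graph must contain the six pendant edges and all three edges at $c$ (they are the only edges available to merge the four remaining components), so $c$ gets degree three. Since in any spanning tree of the bipartite graph $\sum_{l\in L}(d(l)-1)=|S|-1$, and the number of internal $L$-vertices equals $|S|-1$ only when every internal $L$-vertex has degree exactly two, this $(S,L)$ admits only one internal $L$-vertex, not $|S|-1=2$: property~(2) is simply false for it, and the lemma can only be rescued by passing to a smaller pair (here $S=\{s_1\}$, $L=\{a_1,b_1\}$ works). Your proposal contains no mechanism for this refinement --- connectivity is not the issue, since the example is connected --- and finding it (a recursion on smaller $S$, or an extremal choice of $S$ shown to admit the required tree) is precisely the non-routine content of the lemma. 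The gap propagates into your splicing argument for $\opt{G}\ge\opt{G'}+2|S|-2$, which presupposes the tree $T_B$; the converse inequality is sound in outline but its reconnection bookkeeping (one edge of $v_S$ per component of $T-(S\cup L)$, at most $c-1$ external vertices lost, cancelled against the $|S|-c$ bound) still needs to be carried out explicitly.
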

Note that $|L| \ge 2$ (otherwise the graph in Lemma~\ref{lem8}(2)
cannot have internal vertices), and hence each application of the
reduction rule decreases the number of vertices by at least $1$.  The
safeness of the following reduction rule is ensured by
Lemma~\ref{lem8}.

\paragraph{Reduction Rule} (\cite{fomin-13-kernel-mist}): Find
nonempty subsets $S$ and $L$ of vertices as in Lemma~\ref{lem8}.
Return ($G', k'$) where $G'$ is defined in Lemma~\ref{lem8} and $k' =
k - 2|S| + 2$.

\medskip

The main technical obstacle is then to identify a vertex set $L'$ with
$|L'| \ge 2 |N_G(L')|$.  The is trivial when $|V(G)| \ge 3 k - 3$.  In
any spanning tree $T$ of $G$ with $|L(T)| < k$ it holds that $|L(T)|
\ge 2k - 2 \ge 2 |I(T)| = 2 |N_G(L(T))|$; hence we can use $L(T)$ as
$L'$ and a $3 k$-vertex kernel follows.  However, it becomes very
nontrivial to find such a set when $2k < |V(G)| < 3 k - 3$.  Our
approach here is to separate a maximal spanning tree $T$ into several
subtrees and bound the number of $L(T)$ by the number of $I(T)$
residing in each subtree individually.  It is worth mentioning that a
leaf of a subtree may not be a leaf of $T$.

Recall that the removal of any edge $u v\in E(T)$ from a tree $T$
breaks it into two components, one containing $u$ and the other
containing $v$.  In general, the removal of all edges of an edge
subset $E'\subseteq E(T)$ from $T$ breaks it into $|E'| + 1$
components, each being a subtree of $T$.  We would like to divide $T$
in a way that the two ends of any good cotree edge always reside in
the same subtree, hence the following definition.
\begin{definition} 
  An edge $u v\in E(T)$ connecting two internal vertices $u,v$ of $T$
  is {\em critical} if there is no good cotree edge connecting the two
  components of $T - uv$.
\end{definition}
Let $C(T)$ denote the (possibly empty) set of all critical edges in
$T$.  Note that for each non-critical edge $u v$ with both $u, v \in
D(T)$, there must be a good cotree edge connecting the two components
of $T - uv$.
\begin{lemma}\label{lem1} 
  Let $u$ and $v$ be two detachable vertices of a maximal spanning
  tree $T$ and $u v\in E(T)$.  If there is a good cotree edge
  connecting the two components of $T - uv$, then there exists a good
  cotree edge $l w$ such that
  \begin{enumerate}[(1)]
  \item $u\in D_2(T)$, the only good cotree edge incident to $u$ is $l
    u$, and $l w$ crosses $u v$; or
  \item $v\in D_2(T)$, the only good cotree edge incident to $v$ is $l
    v$, and $l w$ crosses $v u$.
  \end{enumerate}
\end{lemma}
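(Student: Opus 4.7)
The plan is to extract all structural information from the non-applicability of Rule~\ref{rule:2} to the maximal spanning tree $T$. Let $l_0 w_0$ be a good cotree edge connecting the two components of $T - uv$; without loss of generality I assume its leaf $l_0$ lies on the $u$-side of $uv$ (the $l_0$-on-$v$-side case is handled by exchanging the roles of $u$ and $v$, which swaps cases (1) and (2) of the lemma). Then $l_0 w_0$ crosses $uv$ with $u$ closer to $l_0$, and I will exhibit the desired witness $lw$ among $l_0 w_0$, $l'_u u$, and $l'_u v$ for a suitable secondary leaf $l'_u$.

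The first step is to show $u \in D_2(T)$. Suppose for contradiction $u \in I_3(T)$. Non-applicability of case~\eqref{case2} of Rule~\ref{rule:2} with $lw = l_0 w_0$ and crossed edge $uv$ forces $v \neq w_0$ and $v \notin I_3(T)$; combined with $v \in D(T)$ this gives $v \in D_2(T)$. Non-applicability of case~\eqref{case3} then rules out every good cotree edge $l_v v$ with $l_v \neq l_0$, so the good cotree edge witnessing $v \in D_2(T)$ must be $l_0 v$. But applying case~\eqref{case2} with $lw = l_0 v$ (in which $v = w$) would substitute under $u \in I_3(T)$, contradicting maximality.

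With $u \in D_2(T)$ in hand, I would split on the good cotree edges incident to $u$. If $l_0 u$ is the unique such edge, case~(1) of the lemma holds directly with $lw = l_0 w_0$. Otherwise fix a good cotree edge $l'_u u$ with $l'_u \neq l_0$. Non-applicability of cases~\eqref{case4} and \eqref{case5} applied to $lw = l_0 w_0$ then yields $v \neq w_0$, $v \in D_2(T)$, and that every good cotree edge at $v$ has its leaf in $\{l_0, l'_u\}$. The remaining work branches on the side of $uv$ containing $l'_u$.

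If $l'_u$ is on the $v$-side, then $l'_u u$ crosses $vu$ (so $w = u$), and non-applicability of case~\eqref{case4} to this $lw$ rules out every good cotree edge at $v$ whose leaf differs from $l'_u$; combined with the earlier constraint and $v \in D_2(T)$, this makes $l'_u v$ the unique good cotree edge at $v$, so case~(2) is witnessed by $lw = l'_u u$. If $l'_u$ is on the $u$-side, I would first rule out $l_0 v$ being good (otherwise case~\eqref{case4} with $lw = l_0 v$ would apply using $l'_u u$ as the required auxiliary witness, contradicting maximality), so $l'_u v$ is the unique good cotree edge at $v$; applying case~\eqref{case4} once more to $lw = l'_u v$ then rules out every good cotree edge at $u$ other than $l'_u u$, and case~(1) is witnessed by $lw = l'_u v$. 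The principal obstacle is not any individual step but the combinatorial bookkeeping across as many as four candidate $lw$'s ($l_0 w_0$, $l_0 v$, $l'_u u$, $l'_u v$): each non-application of Rule~\ref{rule:2} contributes one further constraint on the good cotree edges at $u$ or at $v$, and the case analysis closes only once all constraints are assembled.
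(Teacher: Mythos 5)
Your proposal is correct and follows essentially the same strategy as the paper's proof: fix the connecting good cotree edge with its leaf on the $u$-side, rule out $u\in I_3(T)$ via non-applicability of cases \eqref{case2} and \eqref{case3}, dispose of the sub-case where $l_0u$ is the unique good cotree edge at $u$, and otherwise use cases \eqref{case4} and \eqref{case5} to pin down a single leaf adjacent to both $u$ and $v$ before a final split by side. The only (cosmetic) difference is that you branch on the side containing the secondary leaf $l'_u$ rather than, as the paper does, first proving that the unique leaf adjacent to $v$ coincides with a leaf adjacent to $u$ and then branching on that leaf's side --- the two splits identify the same vertex in the end.
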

\begin{proof}
  Let $l' w'$ be a good cotree edge connecting the two components of
  $T - uv$.  Without loss of generality, we may assume that $l' w'$
  cross $u v$---then $l'$ and $w'$ are in the components of $T - uv$
  containing $u$ and $v$ respectively---the other case follows by
  symmetry.

  We argue first that $u\in D_2(T)$.  Suppose for contradiction, $u\in
  I_3(T)$.  Since Rule~\ref{rule:2}\eqref{case2} is not applicable to
  $l' w'$ and $u v$, we must have $v\not\in I_3(T)$.  Then $v\in
  D_2(T)$, and by definition, there is a good cotree edge $l_v v$ of
  $T$.  Again, since Rule~\ref{rule:2}\eqref{case2} is not applicable
  to $l_v v$ and $u v$, we must have $l_v\ne l'$.  However, $l_v \neq
  l'$ and $v \neq w$ would imply that Rule~\ref{rule:2}\eqref{case3}
  is applicable to $l' w'$, $u v$, and $l_v v$, a contradiction.  Now
  that $u\in D_2(T)$, by definition, there is a good cotree edge
  incident to $u$.  The proof is now completed if $l' u$ is a good
  cotree edge of $T$ and the only one incident to $u$: we are in case
  (1) and $l' w'$ is the claimed edge.  In the rest of the proof we
  may assume otherwise---that is, there is a good cotree edge $l u$ of
  $T$ with $l \ne l'$.

  Since Rule~\ref{rule:2}\eqref{case4} cannot be applicable to $l'
  w'$, $u v$, and $l u$, the vertex $v$ is also in $D_2(T)$ and $v
  \neq w'$.  There is also a good cotree edge $l_v v$ of $T$.
  Similarly it can be inferred that $l_v\ne l'$: otherwise
  Rule~\ref{rule:2}\eqref{case4} is applicable to $l' v$ (i.e., $l_v
  v$), $u v$, and $l u$.  Now that $l'$ is different from both $l$ and
  $l_v$ but Rule~\ref{rule:2}\eqref{case5} is not applicable to $l'
  w'$, $u v$, $l u$ and $l_v v$, we must have $l = l_v$, i.e., both $l
  u$ and $l v$ are good cotree edges of $T$.  We consider now which
  component of $T - uv$ the leaf $l$ belongs to.  If it is with $u$,
  then there cannot be another good cotree edge $l'_u u$ with $l'_u
  \neq l$: otherwise, Rule~\ref{rule:2}\eqref{case4} would be
  applicable on $l v$, $u v$, and $l'_u u$.  In other words, $l u$ is
  the only good cotree edge incident to $u$, and $l v$ is the claimed
  good cotree edge crossing $u v$.  We are in case (1) and $l v$ is
  the claimed edge.  A symmetric argument implies that we are in case
  (2) and $l u$ is the claimed edge if $l$ is in the component of $T -
  uv$ with $v$.
  This concludes the proof.
\end{proof}
The vertices $u$ or $v$ stipulated in Lemma \ref{lem1} turns out to be
our main trouble in analyzing the size of reduced instance.  We use
$D_B(T)$ to denote this set of vertices, which need our special
attention.  For the pair of vertices $u,v$ as in Lemma~\ref{lem1},
$u\in D_B(T)$ if case (1) holds true, and $v\in D_B(T)$ otherwise.
Note that $D_B(T) \subseteq D_2(T) \subseteq I_2(T)$.  This vertex has
degree $2$ in $T$, and its other neighbor (different from $v$ or $u$)
cannot be a leaf of $T$: suppose that it is $u$; the component of $T -
uv$ containing $u$ must contain another $l\in L(T)$ that is
nonadjacent to $u$ in $T$.  Therefore, the only good cotree edge
mentioned in Lemma \ref{lem1} is actually the only edge between it and
$L(T)$ in $G$.  The following corollary follows easily.
\begin{corollary}\label{cor1.5} 
  For each $u\in D_B(T)$ for a maximal spanning tree $T$, we have
  $|L(T)\cap N_G(u)| = 1$.
\end{corollary}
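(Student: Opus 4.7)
My plan is to read off the bound directly from Lemma~\ref{lem1} together with the observation, already hinted at in the paragraph preceding the corollary, that the tree-neighbour of $u$ different from $v$ cannot be a leaf of $T$. By the definition of $D_B(T)$, for $u\in D_B(T)$ there is a tree edge $uv$ for which case~(1) of Lemma~\ref{lem1} applies (case~(2) is symmetric after swapping the roles of $u$ and $v$). In particular $u\in D_2(T)$, there is exactly one good cotree edge incident to $u$, namely $lu$ for some leaf $l$, and $l$ lies in the same component of $T-uv$ as $u$: the certifying good cotree edge $lw$ of Lemma~\ref{lem1} crosses $uv$ with $l$ on the $u$-side, so the tree-path $P_T(l,u)$ is a prefix of $P_T(l,w)$ and avoids the edge $uv$.

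First I would note that $l\in L(T)\cap N_G(u)$, which already supplies the lower bound. For the upper bound, let $l^{*}\in L(T)\cap N_G(u)$ be arbitrary. The edge $l^{*}u$ is either a cotree edge or a tree edge of $T$. In the cotree case, $l^{*}u$ is automatically good (its endpoints being a leaf and an internal vertex), so the uniqueness clause of Lemma~\ref{lem1}(1) forces $l^{*}=l$. In the tree case, $l^{*}$ is one of the two tree-neighbours of $u$, namely $v$ or the other neighbour $u'$. The vertex $v$ is detachable and so lies in $I(T)$, ruling out $l^{*}=v$.

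Thus the whole argument reduces to verifying that $u'\notin L(T)$. I would argue by contradiction: suppose $u'$ is a leaf of $T$. Then $u'$ has degree $1$ in $T$; since $u$ has degree $2$ in $T$ with its other tree-neighbour being $v$, the component of $T-uv$ that contains $u$ would consist of only the two-vertex path $u'u$. But $l$ also belongs to this component, $l\neq u$ (as $u$ is internal while $l$ is a leaf), and $l\neq u'$ (as $lu$ is a cotree edge rather than the tree edge $u'u$), giving a contradiction. I do not anticipate a genuine obstacle here; the only care needed is to keep the symmetric case $v\in D_B(T)$ in mind and to observe that the crossing direction supplied by Lemma~\ref{lem1}(1) is exactly what places $l$ in the component of $T-uv$ containing $u$, which is what makes the two-vertex-path contradiction possible.
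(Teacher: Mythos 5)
Your proof is correct and follows essentially the same route as the paper's (given in the paragraph preceding the corollary): the uniqueness of the good cotree edge $lu$ from Lemma~\ref{lem1}(1) handles all cotree edges to leaves, and the observation that the component of $T-uv$ containing $u$ also contains the leaf $l$, nonadjacent to $u$ in $T$, rules out the other tree-neighbour being a leaf. Your version merely spells out the case analysis (cotree edge versus tree edge, and $v$ versus $u'$) that the paper leaves implicit.
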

Note that $I_3(T) \subseteq D(T)\setminus D_B(T)$ and $I(T)\setminus
D(T)\subseteq I_2(T)$.

\begin{lemma}\label{lem6} 
  Let $T$ be a maximal spanning tree.  For every pair of vertices $u,
  w\in D(T)\setminus D_B(T)$ that are in the same component of $T -
  C(T)$, the path $P_T(u, w)$ visits at least one vertex $v\in
  I(T)\setminus D(T)$ such that for any $l \in L(T)$, the path $P_T(l,
  v)$ visits $D(T)$.
\end{lemma}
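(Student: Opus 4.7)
The plan is to prove the statement in two stages: first establish that the path $P_T(u, w)$ must contain some vertex of $I(T) \setminus D(T)$, then check that every such vertex automatically satisfies the leaf-separation condition.

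For existence, I would argue by contradiction. Write $P_T(u, w) = (u = v_0, v_1, \ldots, v_k = w)$; no edge of this path is critical, since $u, w$ share a component of $T - C(T)$. If $k = 1$, then $uw$ is an edge between two vertices of $D(T) \setminus D_B(T)$, and the contrapositive of Lemma~\ref{lem1} forces $uw$ to be critical---an immediate contradiction. For $k \geq 2$, I would propagate Lemma~\ref{lem1} along the path, leveraging the ``only good cotree edge incident to'' clause. Applied to $v_0 v_1$ with $v_0 \notin D_B$, case~(2) fires and pins a unique leaf $\ell_1$ with $\ell_1 v_1$ the only good cotree edge at $v_1$, placing $\ell_1$ in the component of $T - v_0 v_1$ containing $v_1$. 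The geometric heart of the argument is that since $v_1 \in D_2(T)$ has only $v_0$ and $v_2$ as tree neighbors, the two components of $T - v_0 v_1$ and $T - v_1 v_2$ that both contain $v_1$ intersect precisely in $\{v_1\}$. Hence case~(1) cannot also hold at $v_1 v_2$---it would force the same $\ell_1$ into the $v_1$-side of $v_1 v_2$ and locate a leaf at an internal vertex---so case~(2) fires there instead, handing us $v_2 \in D_B(T)$ with analogous data. Iterating this dichotomy propagates $v_i \in D_B(T)$ for all $1 \leq i \leq k-1$; at the last edge $v_{k-1} v_k$, where $v_k = w \notin D_B$ forces case~(1), the same singleton-intersection argument at $v_{k-1}$ produces the final contradiction.

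Once existence is secured, the second stage is short. Fix any $v \in I(T) \setminus D(T)$ on $P_T(u, w)$. Since $I_3(T) \subseteq D(T)$, we have $v \in I_2(T)$, so $d_T(v) = 2$ and the two tree neighbors of $v$ are exactly its neighbors along $P_T(u, w)$. The same observation applies to every vertex of the component $C$ of $T - D(T)$ containing $v$: each such vertex is internal and not in $D(T)$, so it belongs to $I_2(T)$ and has both tree neighbors on $P_T(u, w)$. Consequently $C$ cannot branch away from the path---it is a sub-path of $P_T(u, w)$ consisting entirely of internal vertices of $T$ and bounded on both ends by vertices of $D(T)$. Therefore $L(T) \cap C = \emptyset$, and for every leaf $\ell$ the path $P_T(\ell, v)$ must leave $C$ through $D(T)$.

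The main obstacle is clearly the first stage: the inductive walk along the path requires careful bookkeeping of which side of each tree edge hosts the unique leaf witness produced by Lemma~\ref{lem1}, together with repeated use of the uniqueness clause (equivalent to Corollary~\ref{cor1.5}) to identify the witnesses seen across two adjacent edges. Once this bookkeeping is laid out, the contradiction collapses to the simple geometric fact that a degree-two vertex $v_i$ of $T$ separates its two adjacent tree-sides by itself alone, leaving no room for the would-be leaf witness to hide.
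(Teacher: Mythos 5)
Your proof is correct and follows essentially the same route as the paper's: both walk along $P_T(u,w)$ and use Lemma~\ref{lem1} (i.e., the non-applicability of Rule~\ref{rule:2}) to force each successive interior vertex into $D_B(T)$ with its unique good-cotree-edge leaf lying on the far side of the current tree edge, which cannot be sustained once the walk reaches $w\notin D_B(T)$. The differences are only organizational---the paper passes to a minimal subpath and exhibits the witness vertex explicitly as the first $v_i$ whose good cotree edge to the tracked leaf is not unique, while you argue existence purely by contradiction and then verify the leaf condition for \emph{every} such vertex via the component of $T-D(T)$ containing it; both versions are sound.
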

\begin{proof} 
  No generality will be lost by assuming that $P_T(u, w)$ is minimal
  (in the sense that it visits no other vertex in $D(T)\setminus
  D_B(T)$).  By assumption, $P_T(u, w)$ is retained in $T - C(T)$.  We
  argue first $u w\not\in E(T)$.  Suppose for contradiction, $u w\in
  E(T)$.  Since $u v\not\in C(T)$, there must be some good cotree edge
  crossing it; however, by Lemma~\ref{lem1} and the definition of
  $D_B(T)$, at least one of $u$ and $w$ is then in $D_B(T)$, a
  contradiction.  Let $P_T(u, w) = u v_1 \cdots v_p w$, where $p \ge
  1$; note that $d_T(v_i) = 2$ and $v_i\not\in D(T)\setminus D_B(T)$
  for each $1\le i\le p$.

  We now find an internal vertex of $P_T(u, w)$ that is not in $D(T)$
  as follows.  If $v_1\not\in D(T)$, then we are done.  Otherwise,
  $v_1\in D_B(T)$, and there is a unique good cotree edge $l v_1$.  We
  prove by contradiction that $l$ is in the same component of $T - u
  v_1$ with $v_1$.  Suppose the contrary, then
  \begin{itemize}
  \item if $u\in I_3(T)$, then Rule~\ref{rule:2}\eqref{case2} is
    applicable to $l v_1$ and $u v_1$; or
  \item if $u\in D_2(T)\setminus D_B(T)$, then there is a good cotree
    edge $l' u$ with $l'\neq l$, and hence
    Rule~\ref{rule:2}\eqref{case4} is applicable to $l v_1$, $u v_1$,
    and $l' u$.
  \end{itemize}
  Noting that every internal vertex of $P$ has degree $2$, this
  actually implies that $l$ must be in the same component of $T - v_p
  w$ with $w$.  The first $v_i$ such that $l v_i$ is {\em not} the
  only good cotree edge incident to $v_i$ is the vertex we need.  Its
  existence can be argued by contradiction as follows.  Suppose for
  contradiction, there is no such a vertex $v_i$, then $l v_p$ is the
  only good cotree edge incident to $v_p$, and
  \begin{itemize}
  \item if $w\in I_3(T)$, then Rule~\ref{rule:2}\eqref{case2} is
    applicable to $l v_p$ and $w v_p$; or
  \item if $w\in D_2(T)\setminus D_B(T)$, then there is a good cotree
    edge $l' w$ with $l'\neq l_1$, and hence
    Rule~\ref{rule:2}\eqref{case4} is applicable to $l v_p$, $w
    v_p$, and $l' w$.
  \end{itemize}
  These contradictions imply that $p \ge 2$ and there must be $2\le
  i\le p$ such that $v_i$ is incident to a good cotree edge $l' v_i$
  with $l' \ne l$.  Since Rule~\ref{rule:2}\eqref{case4} is not
  applicable to $l v_{i-1}$, $v_i v_{i-1}$, and $l' v_i$, we can
  conclude that $v_i\not\in D_2(T)$.  Noting that $v_i\in I_2(T)$, we
  have verified that $v_i$ is an internal vertex of $P_T(u,w)$ not in
  $D(T)$.

  Noting that every internal vertex of $P_T(u,w)$ has degree $2$, for
  any $l \in L(T)$, the path $P_T(l,v)$ necessarily visits either $u$
  or $w$, which is in $D(T)$.  This concludes the proof.
\end{proof}

We are now ready for proving the main result of this section.
\begin{lemma}\label{lem:reduction}
  Let $T$ be a maximal spanning tree of $G$ with $|V(G)|\ge 4$.  If
  $|L(T)| > |I(T)|$, then we can find in polynomial time an
  independent set $L'$ of $G$ such that $|L'| \ge 2|N_G(L')|$.
\end{lemma}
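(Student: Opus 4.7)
The plan is to combine the decomposition of $T$ along its critical edges with Lemma~\ref{lem6} to construct an $L' \subseteq L(T)$ satisfying the required expansion bound. Let $T_1,\ldots,T_r$ be the components of the forest $T - C(T)$ and put $L_i = L(T)\cap V(T_i)$. Because every critical edge joins two internal vertices and is crossed by no good cotree edge, every $G$-neighbor of a leaf in $L_i$---be it a tree-neighbor or the far endpoint of a good cotree edge---lies in $V(T_i)\cap I(T)$. Consequently the sets $N_G(L_i)$ are pairwise disjoint, and I may argue component by component. The exhaustive applicability of Rule~\ref{rule:1} also guarantees that $L(T)$ is an independent set of $G$, hence any $L'\subseteq L(T)$ will automatically be independent.

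Within each component $T_i$ I will apply Lemma~\ref{lem6} along every edge of a spanning subtree $S_i$ of $\hat D_i := (D(T)\setminus D_B(T))\cap V(T_i)$, viewed as a subtree of $T_i$. For adjacent vertices $u,w\in \hat D_i$ in $S_i$, the path $P_T(u,w)$ yields a witness $v\in I(T)\setminus D(T)$ for which every leaf-path $P_T(l,v)$ visits $D(T)$. Any tree-edge $lv$ or good cotree edge $lv$ would make such a path of length $1$ or $2$ skip $D(T)$ entirely, so no such edge can exist; hence $v\notin N_G(L_i)$. Because the paths $P_T(u,w)$ corresponding to different edges of $S_i$ are pairwise edge-disjoint, the witnesses are distinct, and I obtain at least $\max(|\hat D_i|-1,0)$ vertices of $T_i$ lying outside $N_G(L_i)$. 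This gives the bound $|N_G(L_i)| \le |V(T_i)\cap I(T)| - \max(|\hat D_i|-1,0)$.

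With this bound in hand, the natural first try is $L' = L(T)$. When that choice is not enough---small examples show it can fail, e.g.\ a single leaf pendant to an $I_2(T)\setminus D(T)$ vertex---I will prune $L'$ of leaves that contribute disproportionately to $N_G(L')$. Two natural candidates for pruning are: (i) for each $u\in D_B(T)$, the unique leaf $l_u\in N_G(u)$ supplied by Corollary~\ref{cor1.5}, whose removal eliminates $u$ from $N_G(L')$; and (ii) any leaf tree-adjacent to a vertex of $I_2(T)\setminus D(T)$ that is not captured by a Lemma~\ref{lem6} witness. The main obstacle is the final counting: one must show that after these deletions---say $x$ leaves removed in exchange for $y$ internal vertices removed from $N_G$---the inequality $|L(T)|-x \ge 2(|N_G(L(T))|-y)$ still holds. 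I expect this to follow from the hypothesis $|L(T)|>|I(T)|$, combined with the handshake identity $|L(T)|-2 = \sum_{v\in I_3(T)}(d_T(v)-2)$ and a delicate per-component accounting that exploits the maximality of $T$ to force each pruned leaf to ``save'' on average at least two internal vertices.
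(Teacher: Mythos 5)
Your setup matches the paper's: decompose $T$ along $C(T)$, note that no good cotree edge leaves a component so each $N_G(L_i)$ stays inside its own component, use Corollary~\ref{cor1.5} to pair each $D_B(T)$-vertex with its unique leaf neighbour, and use Lemma~\ref{lem6} to manufacture internal vertices of $I(T)\setminus D(T)$ that separate the vertices of $D(T)\setminus D_B(T)$. But the proof stops exactly where the lemma is actually proved: the final counting is deferred with ``I expect this to follow from\ldots,'' and the heuristic you offer for closing it points the wrong way. In the paper's accounting each pruned leaf saves only \emph{one} neighbour, not two: inside a single component $T_0$ chosen by pigeonhole so that $|X|\ge |Y|+1$ (you never make this selection, and working globally the slack from $|L(T)|>|I(T)|$ is a single vertex, too little to absorb the ``$-1$'' from Lemma~\ref{lem6} in every component), one sets $X_1:=X\cap N_G(D_B(T))$, $X_3:=$ leaves pendant to $I_2(T)\setminus D_2(T)$, and $L':=X_2:=X\cap N_T(D(T))\setminus X_1$. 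The two prunings are charged one-for-one against $Y_1:=D_B(T)\cap V(T_0)$ and against $Y_3$ (the non-detachable internal vertices reachable from a leaf without passing $D(T)$), and the factor $2$ comes from elsewhere: the containment $N_G(X_2)\subseteq Y_2:=(D(T)\setminus D_B(T))\cap V(T_0)$ together with $|Y_4|\ge |Y_2|-1$ from Lemma~\ref{lem6} gives $|X_2|\ge |Y_2|+|Y_4|+1\ge 2|Y_2|\ge 2|N_G(X_2)|$.

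Two further points would need repair even within your framework. First, the containment $N_G(L')\subseteq D(T)\setminus D_B(T)$ for the pruned set is not automatic and is where the paper does real work (a leaf of $X_2$ whose tree-neighbour is in $D_2(T)\setminus D_B(T)$ has all its other $I_2(T)$-neighbours in $D_2(T)$ by the definition of detachability); without it, your Lemma~\ref{lem6} witnesses do not bound $|N_G(L')|$. Second, your one-line justification that a witness $v$ cannot lie in $N_G(L_i)$ is incorrect for cotree edges: a good cotree edge $lv$ corresponds to a tree path $P_T(l,v)$ of length at least two, not two, so it does not immediately contradict ``$P_T(l,v)$ visits $D(T)$''; the paper avoids this issue entirely because it never needs $Y_4\cap N_G(X)=\emptyset$, only $N_G(X_2)\subseteq Y_2$ and $Y_2\cap Y_4=\emptyset$.
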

\begin{proof}
  We find all critical edges $C(T)$, and take the forest $T - C(T)$.
  By assumption, there must be some component $T_0$ of $T - C(T)$ of
  which more than half vertices are from $L(T)$.  Let $X$ and $Y$
  denote $L(T)\cap V(T_0)$ and $I(T)\cap V(T_0)$ respectively; then
  $|X| \ge |Y| + 1$.  Since $T$ is maximal (Rule~\ref{rule:1} is not
  applicable), $X$ is an independent set.  We divide $X$ into the
  following three subsets:
  \[
  X_1 := X\cap N_G(D_B(T));\qquad X_2 := X\cap N_T(D(T)) \setminus X_1;
  \quad \text{ and }\quad X_3 := X\setminus (X_1\cup X_2).
  \]
  We will show that $|X_2| \ge 2 |N_G(X_2)|$, and hence $X_2$
  satisfies the claimed condition and can be used as $L'$.  By the
  definition of critical edges, there is no good cotree edge of $T$
  connecting two different components of $T - C(T)$; hence
  $N_G(X)\subseteq Y$.  We accordingly divide $Y$ into subsets.  The
  detachable vertices are either in $Y_1 := D_B(T)\cap V(T_0)$ or $Y_2
  := \left( D(T)\setminus D_B(T) \right) \cap V(T_0)$, while a vertex
  $y\in Y\setminus D(T)$ is in $Y_3$ if there exists $l\in L(T)$ such
  that the path $P_T(l,y)$ does not visit $D(T)$, or in $Y_4$
  otherwise.  Note that $|X| = |X_1| + |X_2| + |X_3|$ and $|Y| =
  |Y_1| + |Y_2| + |Y_3| + |Y_4|$.

  We argue first that $N_G(X_2)\subseteq Y_2$.  It suffices to show
  $N_G(X_2)\subseteq D(T)$ (the definition of $X_2$ requires that a
  vertex in it is nonadjacent to $D_B(T)$ in $G$), which further boils
  down to showing $N_G(X_2)\cap I_2(T)\subseteq D(T)$: since $T$ is
  maximal (Rule~\ref{rule:1} is not applicable), $X_2$ has no neighbor
  in $L(T)$; on the other hand, $I_3(T)\subseteq D(T)$.  Consider a
  vertex $x \in X_2$, and let $y$ be the unique neighbor of $x$ in
  $T$.  By assumption, $y\in D_2(T)\setminus D_B(T)$, and hence there
  is a good cotree edge $l y$ of $T$ with $l\neq x$.  For each $y'\in
  N_G(x)\cap I_2(T)$ different from $y$, the path $P_T(x, y')$ visits
  $y$, using the definition of $D_2(T)$ we can conclude that $y'\in
  D_2(T)$.

  Each $x\in X_1$ has a neighbor $y \in Y_1$.  By
  Corollary~\ref{cor1.5}, $x$ is the only vertex in $N_G(y) \cap
  L(T)$.  Thus, $|X_1| \le |Y_1|$.  The unique neighbor $y$ of a
  vertex $x\in X_3$ in $T$ must be in $I_2(T)\setminus D_2(T)$.  Since
  the trivial path $P_T(x, y)$ (consisting of a single edge $xy$) does
  not visit $D(T)$, we have $y\in Y_3$.  The other neighbor of $y$ in
  $T$ cannot be a leaf of $T$ ($G$ has at least four vertices).  Thus,
  $|X_3| \le |Y_3|$.  By Lemma~\ref{lem6}, for any two different
  vertices $u$ and $w$ of $Y_2$, the path $P_T(u,w)$ visits at least
  one vertex in $Y_4$.  Since $T_0$ is a tree, using induction it is
  easy to show $|Y_4| \ge |Y_2| - 1$.

  Summarizing above, we have
  \begin{align*}
    |X_2| &= |X| - |X_1| - |X_3| & \qquad (\text{because }|X| = |X_1|
    + |X_2| + |X_3|.)
    \\
    &\ge |Y| + 1 - |Y_1| - |Y_3| & \qquad (\text{because }|X| \ge |Y|
    + 1; |X_1| \le |Y_1|; |X_3| \le |Y_3|.)
    \\
    &= |Y_2| + |Y_4| + 1 & \qquad (\text{because }|Y| = |Y_1| + |Y_2|
    + |Y_3| + |Y_4|.)
    \\
    &\ge 2|Y_2| & \qquad (\text{because }|Y_4| \ge |Y_2| - 1.)
    \\
    &\ge 2|N_G(X_2)|.& \qquad (\text{because }N_G(X_2)\subseteq Y_2.)
  \end{align*}
  Hence $X_2$ can be used as the independent set $L'$.  This concludes
  the proof.
\end{proof}
Lemmas~\ref{lem:reduction} and \ref{lem8}, together with
Theorem~\ref{thm:maximal-trees}, imply Theorem~\ref{thm:main}.

\end{document}